\newcommand{\ud}{\,\mathrm{d}}
\newcommand{\sgn}{\text{sgn}}
\journal{PRSA}
\begin{document}

\begin{frontmatter}

\title{The Generalized Fourier Transform: A Unified Framework for the Fourier, Laplace, Mellin and $Z$ Transforms}

%% Group authors per affiliation:
%\author[mymainaddress]{}

\author[myprimaryaddress]{Pushpendra Singh\corref{mycorrespondingauthor}}
\cortext[mycorrespondingauthor]{Corresponding author. \\ Email address: spushp@nith.ac.in; pushpendrasingh@iitkalumni.org}
%\ead[url]{www.elsevier.com}

\author[mysecondaryaddress1]{Anubha Gupta}
\author[mysecondaryaddress2]{Shiv Dutt Joshi}

%\address[mymainaddress]{Department of Electronics \& Communication Engineering, Bennett University, Greater Noida, India.}
\address[myprimaryaddress]{Department of ECE, National Institute of Technology Hamirpur, Hamirpur, India.}
\address[mysecondaryaddress1]{SBILab, Department of ECE, Indraprastha Institute of Information Technology Delhi, Delhi, India}
\address[mysecondaryaddress2]{Department of EE, Indian Institute of Technology Delhi, Delhi, India}

\begin{abstract}
This paper introduces Generalized Fourier transform (GFT) that is an extension or the generalization of the Fourier transform (FT). The Unilateral Laplace transform (LT) is observed to be the special case of GFT. GFT, as proposed in this work, contributes significantly to the scholarly literature. There are many salient contribution of this work.  Firstly, GFT is applicable to a much larger class of signals, some of which cannot be analyzed with FT and LT. For example, we have shown the applicability of GFT on the polynomially decaying functions and super exponentials. Secondly, we demonstrate the efficacy of GFT in solving the initial value problems (IVPs). Thirdly, the generalization presented for FT is extended for other integral transforms with examples shown for wavelet transform and cosine transform. Likewise, generalized Gamma function is also presented. One interesting application of GFT is the computation of generalized moments, for the otherwise non-finite moments, of any random variable such as the Cauchy random variable. Fourthly, we introduce Fourier scale transform (FST) that utilizes GFT with the topological isomorphism of an exponential map. Lastly, we propose Generalized Discrete-Time Fourier transform (GDTFT). The DTFT and unilateral $z$-transform are shown to be the special cases of the proposed GDTFT. The properties of GFT and GDTFT have also been discussed.    

\end{abstract}

\begin{keyword}
Generalized Fourier transform; Hilbert transform; phase transform; generalized Gamma function; wavelet transform; Laplace transform
\end{keyword}
\end{frontmatter}

\section{Introduction}
The Fourier transform (FT) was proposed by the French mathematician Jean Baptiste Joseph Fourier (1768--1830) in a seminal manuscript submitted to the Institute of France in 1807 and his famous book about the analytic theory of heat \cite{FTH1,FTH2,FTH3}. Since the introduction of FT, it has become the most important mathematical tool for the modeling and analysis of an umpteen number of physical phenomena. As a consequence, it is utilized to solve problems of almost all fields of mathematics, engineering, science and technology \cite{SIAM1,FTA01,FQT01,FQT02,FQT03,FQT04,FQT05,FQT06,FQT07,FQT08,FQT09} to study and engineer signals and systems in various applications \cite{SH}. It is an essential tool for transmitting, processing and analyzing signals as well as is helpful in extraction and interpretation of the information. Although FT is widely applicable on a very large class of signals, it does have limitations. For example, FT of a signal $x(t)$ exists, if it is absolutely integrable, i.e., $\int_{-\infty}^{\infty} |x(t)| \, \mathrm{d}t<\infty$. There is a wide class of signals such as periodic sinusoidal signals, constant signals, and signum functions that do not satisfy this condition and hence, these signals are not Fourier transformable. This constraint is overcome by incorporating impulse functions within the framework of FT. This renders the condition of absolute integrability of FT as a sufficient, but not necessary condition. In other words, FT might exist even if this condition is not satisfied. FT is not only used to study signal properties, but also the system properties. It is also used in system modeling. In general, FT is very helpful in analyzing linear time-invariant (LTI) systems. Similar to signals, sometimes the impulse response of a system is not absolutely integrable. An implication of this is that the system is unstable. In order to study such systems, bilateral Laplace transform (BLT) was proposed \cite{LTH01,LTH02,LTH03,LTH04} as a generalization of FT to include a larger class of functions (or impulse response of systems) such as polynomial ($t^m \, \mathrm{u}(t), \, m>-1$) and  exponential ($e^{at} \, \mathrm{u}(t),\, a>0$) functions, where $\mathrm{u}(t)$ is the unit step function. Furthermore, unilateral LT, simply known as the LT, is a special case of BLT, where the transform is computed only for one half of the real axis. In fact, BLT of functions such as polynomial and exponential functions, periodic functions, constant functions, and signum functions defined over the entire real line, $-\infty<t<\infty$, do not exist. While impulse function helped in computing the FT of many of the above stated functions (such as periodic functions), FT and BLT do not exist for almost all polynomially decaying functions ($1/t^m,\, m\ge 1$). Hence, either the unilateral Laplace transform is used or the function is defined only for one half of the real line, such as $e^{at} \, \mathrm{u}(t),\, a>0$. In other words, one studies the causal LTI systems. 
The above discussion motivated us to look for a better generalization than BLT because it appears that BLT itself is limited to a great extent. This study makes a significant scholarly contribution and defines generalized FT (GFT) that is a true generalized framework of FT. GFT eliminates all the above stated problems and provides an effective mathematical tool to deal with these issues. Eventually, it turns out that both FT and BLT are special cases of the proposed GFT. \\

\noindent The salient contributions of this study are as follows:
\begin{enumerate}
    \item We propose generalized Fourier transform that is a true generalization of FT and show that the unilateral LT is a special case of the proposed GFT.
    
    \item The proposed GFT can analyze and synthesize a much larger set of signals compared to the Fourier and Laplace transforms.   
    
    \item Similar to Laplace transform, we demonstrate the efficacy of GFT in solving the initial value problems (IVPs). We also demonstrate the efficacy of GFT in solving those IVPs that cannot be solved using the LT.
    
    \item 
    %Mellin transform is obtained from the FT/LT by the exponential isomorphic mapping. Similarly, 
    We define the Fourier scale transform (FST) from the GFT using the topological isomorphism of exponential map.

    \item Fourier theory has been used to solve differential equations. However, since inception of the FT (1807), there is a perception in the literature that the IVPs cannot be solved using the FT. We propose a solution to this issue and demonstrate the efficacy of the FT in solving the IVPs. 
    
    \item We extend the GFT for the discrete-time systems and propose generalized discrete-time Fourier transform (GDTFT). The DTFT and unilateral $z$-transform are special cases of the proposed GDTFT. We also propose a solution of the difference equations with initial conditions using the DTFT, which is not available so far in the literature. 
    
    \item Finally, this new proposed GFT simplifies both the continuous-time and discrete-time signal analysis. For example: \textbf{(i)} in case of GFT, the region of convergence (ROC) is the entire $s$-plane, if there are no poles; otherwise, it is always right-side of the rightmost pole and FT exists if the ROC includes imaginary axis. %A linear time-invariant (LTI) system is stable if ROC of its transfer function includes the imaginary axis (i.e., all poles are left-side of the imaginary axis).   
    \textbf{(ii)} in case of GDTFT, the ROC is the entire $z$-plane, if there are no poles; otherwise, it is always outside of the outermost pole and DTFT exists if the ROC includes a unit circle. %A discrete LTI system is stable if the ROC of its transfer function includes the unit circle (i.e., all poles are inside the unit circle). 
\end{enumerate}

Although, in this study we have provided theory related to one-dimensional signals, it can be easily extended for the multidimensional (e.g., image, video and time-space) signals. We have briefly mentioned the relevant equations regarding the same in Appendix-A. Rest of the study is organized as follows. In Section-2, we present the Generalized Fourier Transform, relation with FT, BLT, and LT, properties of GFT and solution of IVPs through GFT. In Section-3, we have extended this concept of GFT to the generalization of other Integral transforms, generalization of Gamma function, and discussed one application of GFT in the context of moment generating functions. The discrete-time counterpart, i.e., the generalized discrete-time Fourier transform (GDTFT) and its properties are discussed in Section-4. In the end, some concluding remarks along with the future directions are presented in Section-5.    

\section{The proposed Generalized Fourier Transform}\label{meth}
The pair of Fourier transform (FT) and inverse FT (IFT) of a signal $x(t)$, which satisfies the Dirichlet conditions, is defined as
\begin{equation}
\begin{aligned}
X(\omega)&=\int_{-\infty}^{\infty} x(t) \exp(-j\omega t) \ud t, \quad -\infty < \omega < \infty \\
\text{and \,\,} x(t)&={\frac{1}{2\pi}}\int_{-\infty}^{\infty}  X(\omega) \exp(j\omega t) \ud \omega, \quad -\infty < t < \infty,
\end{aligned}\label{FT1}
\end{equation}
subject to the existence of the above integrals. This time-frequency pair is also denoted by $x(t)\rightleftharpoons X(f)$, where $\omega=2 \pi f$, $\omega$ denotes the angular frequency in radians/sec, and $f$ denotes the frequency in Hz.
The bilateral Laplace transform (BLT), a generalization of FT, and its inverse are defined as
\begin{equation}
\begin{aligned}
	X(s)&=\int_{-\infty}^{\infty} x(t) \exp(-s\, t) \ud t,\\
	\text{ and } \quad x(t) &=\frac{1}{2\pi j}\int_{\sigma-j\infty}^{\sigma + j\infty} X(s) \exp(s\, t) \, \mathrm{d} s,
\end{aligned}\label{LT1}
\end{equation}
respectively, where the symbol `\textit{s}' denotes the complex frequency $s=\sigma+j\omega$. BLT is the FT of the signal multiplied with exponentially decaying function. In other words, it is the FT of $x(t)e^{-\sigma t}$.
The unilateral LT is the special case of BLT, when the considered signals (or impulse response of linear time-invariant systems) are causal, i.e., time support is only for $0\le t< \infty$ and hence, $X(s)=\int_{0}^{\infty} x(t) \exp(-s t) \ud t$.  
The signum (`\textit{sgn}') and unit step functions are defined \cite{SH} as
\begin{equation}
	sgn(t) =
	\begin{cases}
		1,  \quad t>0, \\
		0, \quad t = 0,\\
		-1   \quad t<0.
	\end{cases} \quad \text{ and, } \quad u(t) =
\begin{cases}
1,  \quad t\ge 0, \\
0,   \quad t<0,
\end{cases} \quad \text{ or } \quad u(t) =
\begin{cases}
1,  \quad t> 0, \\
\frac{1}{2}, \quad t=0,\\
0,   \quad t<0.
\end{cases} 
\label{FTexa1}
\end{equation} 

If two functions have identical values except at a countable number of points, their integration is also identical because Lebesgue measure of a set of countable points is always zero. These functions are said to belong to the same equivalent class of functions. In other words, from the theory of integral calculus, both the above definitions of the unit step function in \eqref{FTexa1} are equivalent.     

Motivated with the theory of Fourier transform \eqref{FT1} and the Laplace transform \eqref{LT1}, we define a generalized version of the FT, namely the  generalized Fourier transform (GFT) $X(\omega,\sigma,\tau)$ of a signal $x(t)$, as 
\begin{equation}
\begin{aligned}
X(\omega,\sigma,\tau)&=\int_{-\infty}^{\infty} x(t) \, h(\sigma,\tau,t) \exp(-j\omega t) \ud t,\\
x(t) \, h(\sigma,\tau,t)&={\frac{1}{2\pi}}\int_{-\infty}^{\infty}  X(\omega,\sigma,\tau) \exp(j\omega t) \ud \omega.
\end{aligned}\label{GFT1}
\end{equation}

\noindent A number of interesting special cases arise out of GFT, given as below:
\begin{enumerate}
	\item Fourier transform (FT): if $h(\sigma,\tau,t)=1$.
	\item Fourier Cosine transform: if $h(\sigma,\tau,t)=1$ and $x(t)$ is an even function.
    \item Fourier Sine transform: if $h(\sigma,\tau,t)=1$ and $x(t)$ is an odd function.
   	\item Short-time Fourier transform (STFT): if $h(\sigma,\tau,t)=w(t-\tau)$, where $w(t)$ is a window function.
   	
   	\item Unilateral LT (in short LT): if $h(\sigma,\tau,t)=\exp(-\sigma t)\mathrm{u}(t)$ and $\sigma>c$, where $c$ is a constant that depends on the signal $x(t)$.
   	\item Bilateral LT: if $h(\sigma,\tau,t)=\exp(-\sigma t)$, where $\mathrm{u}(t)$ is the unit step function as defined in \eqref{FTexa1} and $\sigma>c$, where $c$ is a constant that depends on the signal $x(t)$.

	\item Generalized Fourier transform (GFT): if $h(\sigma,\tau,t, p)= (\sgn(t) \, t)^p \exp(-\sigma \, \sgn(t) \, t)$, where $\sgn(t)$ is the sign function \eqref{FTexa1}. We designate this proposed representation as GFT. The development of this transform is the fundamental contribution of this study.
\end{enumerate}

\subsection{The GFT and its relation with Fourier and Laplace transforms}
Our first aim of this study is to define and explore the GFT of a signal $x(t)$, hereby, presented as 
\begin{equation}
	\begin{aligned}
		\mathcal{G}\{x(t)\}=X(\omega,\sigma)&=\int_{-\infty}^{\infty} x(t) \exp(-\sigma \, \sgn(t) \, t) \exp(-j\omega t) \ud t,
	\end{aligned}\label{GFT2}
\end{equation}
where the value of $\sigma$ is such that the integral \eqref{GFT2} converges to a finite value, i.e., $\sigma\in \text{ROC}$ or the region of convergence that depends on the signal $x(t)$. We also denote \eqref{GFT2} as $X(\omega,\sigma)=X(s,s^*)$ where complex frequency $s=\sigma+j\omega$, its complex conjugate $s^*=\sigma-j\omega$, and $\sgn(t) \, t=|t|$. 
Using the `sgn' function as defined in \eqref{FTexa1}, we write \eqref{GFT2} as
\begin{equation}
	\begin{aligned}
		X(\omega,\sigma)&=\int_{-\infty}^{0} x(t) \exp(\sigma t) \exp(-j\omega t) \ud t+\int_{0}^{\infty} x(t) \exp(-\sigma t) \exp(-j\omega t) \ud t.
	\end{aligned}\label{GFT3}
\end{equation}
We further simplify \eqref{GFT3} and write as
\begin{equation}
	\begin{aligned}
		X(s,s^*) =\mathfrak{X}(s^*)+X(s)&=\int_{0}^{\infty} x(-t) \exp(-s^* t) \ud t+\int_{0}^{\infty} x(t) \exp(-s t) \ud t,
	\end{aligned}\label{GFT4}
\end{equation}
where $X(s)$ is the Laplace transform defined as 
\begin{equation}
	\begin{aligned}
		X(s)&=\int_{-\infty}^{\infty} x(t) \, \mathrm{u}(t) \exp(-s t) \ud t=\int_{0}^{\infty} x(t) \exp(-s t) \ud t,
	\end{aligned}\label{GFT5}
\end{equation}
and $\mathfrak{X}(s^*)$ is the new term, designated as the complementary LT (CLT), and hereby defined as 
\begin{equation}
	\begin{aligned}
	\mathfrak{X}(s^*)&=\int_{-\infty}^{\infty} x(t) \, \mathrm{u}(-t) \exp(s^* t) \ud t=\int_{-\infty}^{0} x(t) \exp(s^* t) \ud t=\int_{0}^{\infty} x(-t) \exp(-s^* t) \ud t.
	\end{aligned}\label{GFT6}
\end{equation}
If there is a delta function at the origin, it can be included either in \eqref{GFT5} or \eqref{GFT6}, or with half amplitude in both. 

From \eqref{GFT2}, \eqref{GFT5} and \eqref{GFT6}, the following observations are in order:
\begin{enumerate}
\item If $x(t)$ is an even function, $X(s^*)=\mathfrak{X}(s^*)$.
\item If $x(t)$ is an odd function, $X(s^*)=-\mathfrak{X}(s^*)$.
\item If function $x(t)$ is a sum of the even and odd functions, i.e., $x(t)=x_e(t)+x_o(t)$ and $x(-t)=x_e(t)-x_o(t)$, then $X(s)=X_e(s)+X_o(s)$ and $\mathfrak{X}(s^*)=\mathfrak{X}_e(s^*)+\mathfrak{X}_o(s^*)$.
\item If $\sigma=0$, then GFT \eqref{GFT2} becomes the FT \eqref{FT1}, provided FT exists for the given signal.
\item If $x(t)=0$ for $t<0$, then GFT \eqref{GFT2} becomes the LT \eqref{GFT5}.
\item If we simply consider $|\sgn(t)|$ in \eqref{GFT2}, then it becomes the bilateral LT.    
\end{enumerate}

Using the IFT \eqref{FT1} and GFT \eqref{GFT3}, we obtain 
\begin{equation}
	\begin{aligned}
		x(t) \exp(-\sigma \, \sgn(t) \, t)	&=\frac{1}{2\pi}\int_{-\infty}^{\infty} X(\omega,\sigma) \exp(j\omega t) \, \mathrm{d} \omega.
	\end{aligned}\label{IGFT00}
\end{equation}

One simple way to recover/synthesize the original signal $x(t)$ is by evaluating \eqref{IGFT00} at $\sigma=0$. Other ways to synthesize $x(t)$ as derived from \eqref{IGFT00} are
\begin{align}
x(t)=\frac{1}{2\pi}\int_{-\infty}^{\infty} X(\omega,\sigma) \exp(\sigma \, \sgn(t) \, t) \exp(j\omega t) \, \mathrm{d} \omega, \quad \forall \, t \in \mathbb{R},   
\end{align}
or
% \begin{equation}
% 	\begin{aligned}
% 	x(t) &=\frac{1}{2\pi}\int_{-\infty}^{\infty} X(\omega,\sigma) \exp(\sigma \, \sgn(t) \, t) \exp(j\omega t) \, \mathrm{d} \omega, \quad \forall \, t \in \mathbb{R},\\
% 	x(t) &=\frac{1}{2\pi}\int_{-\infty}^{\infty} X(s) \exp(\sigma t) \exp(j\omega t) \, \mathrm{d} \omega, \quad t \ge 0,\\
% 	x(t) &=\frac{1}{2\pi}\int_{-\infty}^{\infty} \mathfrak{X}(s^*) \exp(-\sigma t) \exp(j\omega t) \, \mathrm{d} \omega, \quad t \le 0,	
% 	\end{aligned}\label{IGFT2}
% \end{equation}
\begin{equation}
	x(t) =
	\begin{dcases}
		\frac{1}{2\pi}\int_{-\infty}^{\infty} X(s) \exp(\sigma t) \exp(j\omega t) \, \mathrm{d} \omega, & t \in [0,\,\infty) \\
		\frac{1}{2\pi}\int_{-\infty}^{\infty} \mathfrak{X}(s^*) \exp(-\sigma t) \exp(j\omega t) \, \mathrm{d} \omega, & t \in (-\infty,\,0].
	\end{dcases} 
\label{IGFT2}
\end{equation}
On substituting $s=\sigma + j\omega$ in \eqref{IGFT2}, we obtain 
% \begin{equation}
% 	\begin{aligned}
% 		x(t) &=\frac{1}{2\pi j}\int_{\sigma-j\infty}^{\sigma + j\infty} X(s) \exp(s t) \, \mathrm{d} s, \quad t \in [0,\,\infty),\\
% 		x(t) &=\frac{1}{2\pi j}\int_{\sigma-j\infty}^{\sigma + j\infty} \mathfrak{X}(s^*) \exp(-s^* t) \, \mathrm{d} s, \quad t \in (-\infty,\,0],	
% 	\end{aligned}\label{IGFT3}
% \end{equation}
\begin{equation}
	x(t) =
	\begin{dcases}
		\frac{1}{2\pi j}\int_{\sigma-j\infty}^{\sigma + j\infty} X(s) \exp(s t) \, \mathrm{d} s,  & t \in [0,\,\infty), \\
		\frac{1}{2\pi j}\int_{\sigma-j\infty}^{\sigma + j\infty} \mathfrak{X}(s^*) \exp(-s^* t) \, \mathrm{d} s, & t \in (-\infty,\,0],
	\end{dcases} 
\label{IGFT3}
\end{equation}
where the value of $x(t)$ at $t=0$ can be obtained by either one of the above formulations. 

\vspace{2 mm}\noindent\textbf{Remark 1.} There are many classes of signals (e.g. one-sided exponential signals $\exp(at)\,\mathrm{u}(t)$ and $\exp(-at)\,\mathrm{u}(-t)$ with $a>0$) for which FT does not exist, but BLT exists. There are also classes of signals (e.g. constant, periodic and two sided signals such as $1,\, \cos(\omega_c t), \, \sin(\omega_c t)$ and $\exp(-a|t|)$ with $a>0$) for which BLT does not exist, but FT exists. This suggests that the LT is not a true generalization of the FT. On the other hand, the proposed GFT can analyze and synthesize all these classes of signals and therefore, presents a true generalization of FT. 

\vspace{2 mm} For examples, first we consider the signal $x(t)=1$, its FT is given by  $X(\omega)=2\pi \delta(\omega)$, its LT is $X(s)=\frac{1}{s}, \, \text{for \, Re}\{s\}=\sigma>0$, but its BLT does not exist. The proposed GFT is given by
\begin{equation}
	\begin{aligned}
		X(s,s^*)=\frac{1}{s^*}+\frac{1}{s}&=\frac{2\sigma}{\sigma^2+\omega^2}, \\
			\text{ and } \lim_{\sigma \to 0} \frac{2\sigma}{\sigma^2+\omega^2}&=2\pi \delta(\omega),					
	\end{aligned}\label{ex1}
\end{equation} 
because $\int_{-\infty}^{\infty}\frac{2\sigma}{\sigma^2+\omega^2} \, \mathrm{d}\omega=2\tan^{-1}\left(\frac{\omega}{\sigma}\right)\Big|_{-\infty}^{\infty}=2 \pi$. This clearly shows that the proposed GFT is a true generalization of the FT. Consider the GFT pair, $\sgn(t) e^{-\sigma \, \sgn(t)\, t} \rightleftharpoons \frac{-2j \omega}{\sigma^2+\omega^2} $. On using the duality property of FT ($x(t)\rightleftharpoons X(f)$ and $X(t)\rightleftharpoons x(-f)$ or $X(t) \rightleftharpoons 2\pi x(-\omega)$), we obtain $\frac{1}{\pi}\frac{t}{\sigma^2+t^2} \rightleftharpoons -j \, \sgn(\omega) e^{-\sigma \, \sgn(\omega)\, \omega}$, which is otherwise difficult to derive directly. For $\lim \sigma \to 0$, it is the FT of the Hilbert transform (HT) kernel. Similarly, $\frac{1}{\pi}\frac{\sigma}{\sigma^2+t^2} \rightleftharpoons e^{-\sigma \, \sgn(\omega)\, \omega}$ and for $\lim \sigma \to 0$, it is the FT of the Dirac delta function. Therefore, the function $\frac{1}{\pi}\frac{t}{\sigma^2+t^2}$ is the HT of the function $\frac{1}{\pi}\frac{\sigma}{\sigma^2+t^2}$.    Next, we present Table-\ref{tab1} where some signals and their GFT are presented. 

{\renewcommand{\arraystretch}{1.5}%
\begin{table}[h]	
	\begin{center}
	\caption{Some signals and their GFT with information whether FT and/or BLT exists. To obtain FT from the given GFT, first separate out the real and imaginary parts by substituting $s=\sigma+j\omega$ and then consider $\lim \sigma \to 0$. We observe that the ROC is the entire $s$-plane, if there are no poles; otherwise, it is always right-side of the rightmost pole. Furthermore, the FT exists, if the ROC includes imaginary axis, $s= j\omega$.} \label{tab1}
	\begin{tabular}{|l|l|l|}
		\hline
		Signals & Proposed GFT and ROC & FT/BLT \\ \hline
	$\delta(t)$	&  $1$,      all $s$ & Yes/Yes  \\
	$\delta(t-t_0), \, t_0>0$	&  $\exp(-s t_0)=\exp[-(\sigma+j\omega) t_0]$,      all $s$ & Yes/Yes  \\	
	$\delta(t+t_0), \, t_0>0$	&  $\exp(-s^* t_0)=\exp[-(\sigma-j\omega) t_0]$,      all $s$ & Yes/Yes  \\	
	1	&  $\frac{1}{s^*}+\frac{1}{s}=\frac{2\sigma}{\sigma^2+\omega^2}$,      $\text{Re}\{s\}>0$ & Yes/No  \\ 
	$\sgn(t)$	&  $\frac{-1}{s^*}+\frac{1}{s}=\frac{-j2\omega}{\sigma^2+\omega^2}$,      $\text{Re}\{s\}>0$ & Yes/No  \\ 
	$\mathrm{u}(t)$	&  $\frac{1}{s}=\frac{\sigma}{\sigma^2+\omega^2}-\frac{j\omega}{\sigma^2+\omega^2}$,      $\text{Re}\{s\}>0$ & Yes/Yes  \\ 
	$\mathrm{u}(-t)$	&  $\frac{1}{s^*}=\frac{\sigma}{\sigma^2+\omega^2}+\frac{j\omega}{\sigma^2+\omega^2}$,      $\text{Re}\{s\}>0$ & Yes/Yes  \\ 		
	$e^{-at}$	&  $\frac{1}{s^*-a}+\frac{1}{s+a}=\frac{2\sigma}{\sigma^2-(a+j\omega)^2}$,   $\text{Re}\{s\}>\text{max}\left(\text{Re}\{a\} ,\text{Re}\{-a\}\right)$ & No/No \\ 
	
	$e^{-a|t|}$	&  $\frac{1}{s^*+a}+\frac{1}{s+a}=\frac{2\sigma+2a}{\sigma^2+\omega^2+2a\sigma+a^2}$,   $\text{Re}\{s\}>\text{Re}\{-a\}$ & No/No \\

	$e^{j\omega_0 t}$	&  $\frac{1}{s^*+j\omega_0}+\frac{1}{s-j\omega_0}=\frac{2\sigma}{\sigma^2+(\omega-\omega_0)^2}$,   $\text{Re}\{s\}>0$ & Yes/No \\	
	
	$\cos(\omega_0 t)$	&  $\frac{s^*}{(s^*)^2+(\omega_0)^2}+\frac{s}{s^2+\omega^2_0}=\frac{\sigma}{\sigma^2+(\omega+\omega_0)^2}+\frac{\sigma}{\sigma^2+(\omega-\omega_0)^2}$,   $\text{Re}\{s\}>0$ & Yes/No  \\	
	$\sin(\omega_0 t)$	&  $\frac{-\omega_0}{(s^*)^2+(\omega_0)^2}+\frac{\omega_0}{s^2+\omega^2_0}=-j\left[\frac{-\sigma}{\sigma^2+(\omega+\omega_0)^2}+\frac{\sigma}{\sigma^2+(\omega-\omega_0)^2}\right]$,   $\text{Re}\{s\}>0$ & Yes/No \\
	$t^m,\, m>-1$	&  $\Gamma(m+1)\left[\frac{(-1)^m}{(s^*)^{m+1}}+\frac{1}{s^{m+1}}\right]$,       $\text{Re}\{s\}>0$ & No/No \\ 
	$|t|^m,\, m>-1$	&  $\Gamma(m+1)\left[\frac{1}{(s^*)^{m+1}}+\frac{1}{s^{m+1}}\right]$,       $\text{Re}\{s\}>0$ & No/No\\
	\hline
	\end{tabular} 
\end{center}
\end{table}}

\subsection{Properties of the GFT}
\noindent\textbf{Remark 2.} It is pertinent to note that all the properties of the FT are valid for the proposed GFT.
%by implementing the GFT using the Fourier representation. 
For example, if we consider two signals and their GFTs using  FT as:  $x_1(t)\exp(-\sigma \, \sgn(t) \, t) \rightleftharpoons X_1(\omega,\sigma)$ and $x_2(t)\exp(-\sigma \, \sgn(t)\, t) \rightleftharpoons X_2(\omega,\sigma)$, then,
     \begin{align}
     	x_1(t)\exp(-\sigma \, \sgn(t) \, t) \times x_2(t)\exp(-\sigma \, \sgn(t) \, t)  \rightleftharpoons \frac{1}{2\pi} \left[X_1(\omega,\sigma) \star X_2(\omega,\sigma)\right], \nonumber\\
     	\left[x_1(t)\times x_2(t)\right]\exp(-\sigma \, \sgn(t) \, t)  \rightleftharpoons \frac{1}{2\pi} \left[X_1(\omega,\sigma_1) \star X_2(\omega,\sigma_2)\right],\label{conv1}
     \end{align}
     where $\star$ is the convolution operation and, $\sigma_1$ and $\sigma_2$ are suitably selected such that $\sigma_1+\sigma_2=\sigma$, yielding one degree of freedom in choosing the attenuation parameter depending on the application in hand. Similarly, we can show that
     \begin{align}
      x_1(t)\exp(-\sigma \, \sgn(t) \, t) \star x_2(t)\exp(-\sigma \, \sgn(t) \, t) \rightleftharpoons X_1(\omega,\sigma) \times X_2(\omega,\sigma),\nonumber\\
      x_1(t) \star x_2(t)\exp(-\sigma \, \sgn(t) \, t) \rightleftharpoons X_1(\omega,0) \times X_2(\omega,\sigma),\label{conv2}\\
      x_1(t) \exp(-\sigma \, \sgn(t) \, t) \star x_2(t) \rightleftharpoons X_1(\omega,\sigma) \times X_2(\omega,0). \nonumber
     \end{align}
     	
     Using these properties of \eqref{conv1} and \eqref{conv2}, we can study those linear time-invariant (LTI) systems whose impulse response is not absolutely integrable. That is, we can analyze unstable systems by considering the range of damping factor $\sigma$ and capture the behavior of systems by taking $\lim \sigma \to 0.$          
\newtheorem{thm}{Theorem}
\begin{thm}
Let $\tilde{x}(t)$ be a periodic function of period $T$, i.e., $\tilde{x}(t+T)=\tilde{x}(t),\, \forall\, t$. If GFT of $\tilde{x}(t)$ exists, then
\begin{align}
\tilde{X}(s,s^*) =\left(\frac{e^{-s^* T}}{1-e^{-s^* T}} \right)\int_0^T \tilde{x}(t) e^{s^* t} \, \mathrm{d}t + \left(\frac{1}{1-e^{-s T}}\right)\int_0^T \tilde{x}(t) e^{-s t} \, \mathrm{d}t, \quad \text{for}\quad \text{Re}\{s\}>0.   \label{thrm0}   
\end{align}
\end{thm}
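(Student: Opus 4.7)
The plan is to start from the decomposition $\tilde{X}(s,s^*)=\mathfrak{X}(s^*)+X(s)$ given in \eqref{GFT4} and compute the two one-sided pieces separately using the standard periodic Laplace-transform trick. For the second term,
\begin{equation*}
X(s)=\int_0^\infty \tilde{x}(t)e^{-st}\ud t=\sum_{n=0}^\infty \int_{nT}^{(n+1)T}\tilde{x}(t)e^{-st}\ud t,
\end{equation*}
I would substitute $u=t-nT$ in each sub-integral, invoke $\tilde{x}(u+nT)=\tilde{x}(u)$, pull the resulting factor $e^{-snT}$ out of the integral, and sum the geometric series $\sum_{n\ge 0}e^{-snT}=1/(1-e^{-sT})$. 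This immediately yields the second term in \eqref{thrm0}, and the geometric sum is where the hypothesis $\mathrm{Re}\{s\}>0$ is used (so $|e^{-sT}|<1$).

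For the first term, I would first observe that $y(t):=\tilde{x}(-t)$ is again $T$-periodic, since periodicity of $\tilde{x}$ gives $y(t+T)=\tilde{x}(-t-T)=\tilde{x}(-t)=y(t)$. Writing $\mathfrak{X}(s^*)=\int_0^\infty y(t)e^{-s^* t}\ud t$ and applying the same periodic-summation argument produces
\begin{equation*}
\mathfrak{X}(s^*)=\frac{1}{1-e^{-s^* T}}\int_0^T \tilde{x}(-u)e^{-s^* u}\ud u.
\end{equation*}
To match the form stated in \eqref{thrm0}, I would then make the substitution $v=-u$, turning the one-period integral into $\int_{-T}^{0}\tilde{x}(v)e^{s^* v}\ud v$, and finally shift the interval of integration back to $[0,T]$ by letting $w=v+T$ and using $\tilde{x}(w-T)=\tilde{x}(w)$. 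The exponential picks up a factor $e^{-s^* T}$, producing exactly the prefactor $e^{-s^* T}/(1-e^{-s^* T})$ and the integrand $\tilde{x}(t)e^{s^* t}$ required in \eqref{thrm0}.

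Adding the two pieces gives the stated identity, and the condition $\mathrm{Re}\{s\}>0$ (hence $\mathrm{Re}\{s^*\}>0$) provides the common ROC in which both geometric series converge absolutely and the interchange of sum and integral in each piece is justified by dominated convergence on $[0,T]$. The only mildly delicate step is the handling of the CLT term: one has to be careful that the change of variable $v=-u$ flips the interval, and that the subsequent periodic shift is what converts the ``wrong-sign'' exponential $e^{-s^* u}$ on $\tilde{x}(-u)$ into the $e^{+s^* t}$ appearing on $\tilde{x}(t)$ in the statement. Everything else is routine manipulation of geometric series and substitutions.
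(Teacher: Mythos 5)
Your proposal is correct and follows essentially the same route as the paper: both arguments reduce the two one-sided integrals to a single period via periodicity (the paper by writing $\tilde{x}(t)=\sum_m x(t-mT)$ with $x$ supported on $[0,T]$, you by partitioning $[0,\infty)$ into period-length intervals — the same computation in dual form), extract the factors $e^{-msT}$ and $e^{-ms^*T}$, and sum the geometric series using $\text{Re}\{s\}>0$. Your explicit change-of-variables chain for the complementary term correctly reproduces the prefactor $e^{-s^*T}/(1-e^{-s^*T})$ that the paper obtains from starting its negative-side geometric series at $m=1$.
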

\begin{proof}
Let $x(t)$ be a time-limited signal defined as
\begin{equation}
	x(t) =
	\begin{cases}
		\tilde{x}(t),  \quad 0\le t \le T, \\
		0, \quad \text{otherwise,}
	\end{cases}
\label{Thrm1}
\end{equation}
where $\tilde{x}(t)$ is a periodic signal that can also be written as   
\begin{align}
\tilde{x}(t)=\sum_{m=-\infty}^\infty x(t-mT)=\sum_{m=-\infty}^{-1} x(t-mT)+\sum_{m=0}^\infty x(t-mT). \label{Thrm2}
\end{align}
The GFT of \eqref{Thrm2} yields
\begin{align}
 \tilde{X}(s,s^*) &= \left(\dots+\int_{-3T}^{-2T} x(t+3T) e^{s^*t} \, \mathrm{d}t+\int_{-2T}^{-T} x(t+2T) e^{s^*t} \, \mathrm{d}t+ \int_{-T}^0 x(t+T) e^{s^*t} \, \mathrm{d}t \right) \nonumber \\ 
 & + \left(\int_0^T x(t) e^{-st} \, \mathrm{d}t+ \int_T^{2T} x(t-T) e^{-st} \, \mathrm{d}t+\int_{2T}^{3T} x(t-2T) e^{-st} \, \mathrm{d}t+ \dots\right),
\label{Thrm3}
\end{align}
which can be simplified as
\begin{align}
\tilde{X}(s,s^*) =\left( \dots +e^{-3s^*T}+ e^{-2s^*T} + e^{-s^*T}\right)\int_{0}^T x(t) e^{s^*t} \, \mathrm{d}t+ \left( 1 + e^{-sT} + e^{-2sT}+\dots \right)\int_{0}^T x(t) e^{-st} \, \mathrm{d}t. \label{Thrm4}
\end{align}
Thus, we obtain \eqref{thrm0},
which completes the proof. 
\end{proof}

%\subsection{The GFT of derivatives}
%We observe the following properties of the GFT of derivatives: 
Next, we derive the properties of GFT. Here, ROC is the entire $s$ plane, if there are no poles; otherwise, ROC is always right-side of the rightmost pole.
\begin{enumerate}
    \item Linearity:
    \begin{align}
        \text{ if } \quad \mathcal{G}\{x(t)\} & =X(s,s^*)=\mathfrak{X}(s^*)+X(s), \quad \text{ with ROC } R\\
       \text{ then } \quad \mathcal{G}\left\{\sum_{i} a_i x_i(t)\right\} & =\sum_i a_iX_i(s,s^*)= \sum_i a_i\left[\mathfrak{X}_i(s^*)+X_i(s)\right], \quad  a_i \in \mathbb{C},
    \end{align}
    where ROC is at least $\bigcap^n_{i=0} R_i$ and $R_i$ is the ROC corresponding to $X(s,s^*)$.
    \item The GFT of differentiation in time-domain:
    \begin{align}
    \mathcal{G}\{x'(t)\} & =\left[-s^*\mathfrak{X}(s^*)+x(0)\right]+\left[sX(s)-x(0)\right],\\
    \mathcal{G}\{x''(t)\} & =\left[(s^*)^2 \mathfrak{X}(s^*)-s^*\,x(0)+x'(0)\right]+\left[s^2X(s)-s\,x(0)-x'(0)\right],
\end{align}
and for the $m$th order differentiation, we obtain
\begin{align}
    \mathcal{G}\{x^{(m)}(t)\} & =\left[(-s^*)^m \mathfrak{X}(s^*)+\sum_{i=1}^m (-s^*)^{(m-i)} x^{(i-1)}(0)\right]+\left[s^m X(s)-\sum_{i=1}^m s^{(m-i)} x^{(i-1)}(0)\right], \label{dp1}
\end{align}
where $x'(0)=\frac{\mathrm{d}}{\mathrm{d}t} x(t)\Big|_0$ and $x^{(m)}(0)=\frac{\mathrm{d}^m}{\mathrm{d}t^m} x(t)\Big|_0$, and ROC is at least $R$.

\item The GFT of the running integral of a signal $x(t)$:  
\begin{align}
 \mathcal{G}\left\{\int_0^t x(\tau) \, \mathrm{d}\tau \right\} & =\frac{-\mathfrak{X}\left(s^*\right)}{s^*}+\frac{X\left(s\right)}{s},\\
 \mathcal{G}\left\{\int_t^0 x(\tau) \, \mathrm{u}(-t) \, \mathrm{d}\tau +\int_0^t x(\tau) \, \mathrm{u}(t) \, \mathrm{d}\tau \right\} & =\frac{\mathfrak{X}\left(s^*\right)}{s^*}+\frac{X\left(s\right)}{s},
\end{align}
where ROC is at least $R\,\bigcap\,\{\text{Re}\{s\}>0\}$.

\item The GFT of $x(t)$ multiplied by various functions or scaled in amplitude:
\begin{align}
 \mathcal{G}\{e^{-at}\,x(t)\} & =\mathfrak{X}(s^*-a)+X(s+a); \quad \text{ROC is } R-|\text{Re}\{a\}|\\
 \mathcal{G}\{e^{-a|t|}\,x(t)\} & =\mathfrak{X}(s^*+a)+X(s+a); \quad \text{ROC is } R+\text{Re}\{a\}\\
  \mathcal{G}\{t^{m}\,x(t)\} & =\frac{\mathrm{d}^m}{\mathrm{d}{s^*}^m }\mathfrak{X}(s^*)+(-1)^m\frac{\mathrm{d}^m}{\mathrm{d}s^m }X(s);\quad \text{ROC is } R\\  
   \mathcal{G}\{|t|^{m}\,x(t)\} & =(-1)^m\frac{\mathrm{d}^m}{\mathrm{d}{s^*}^m }\mathfrak{X}(s^*)+(-1)^m\frac{\mathrm{d}^m}{\mathrm{d}s^m }X(s) \quad \text{ROC is } R.   
\end{align}
It is also observed that 
\begin{align}
 \frac{\mathrm{d}^m}{\mathrm{d}{\sigma}^m } X(s,s^*) & =\frac{\partial^m}{\partial{s^*}^m } X(s,s^*)+(-1)^m\frac{\partial^m}{\partial{s}^m } X(s,s^*)=\frac{\mathrm{d}^m}{\mathrm{d}{s^*}^m }\mathfrak{X}(s^*)+(-1)^m\frac{\mathrm{d}^m}{\mathrm{d}s^m }X(s),\\
 \frac{\mathrm{d}^m}{\mathrm{d}{\omega}^m } X(s,s^*) & ={j^m}\frac{\partial^m}{\partial{s^*}^m } X(s,s^*)+{(-j)^m}\frac{\partial^m}{\partial{s}^m } X(s,s^*)={j^m}\frac{\mathrm{d}^m}{\mathrm{d}{s^*}^m }\mathfrak{X}(s^*)+{(-j)^m}\frac{\mathrm{d}^m}{\mathrm{d}s^m }X(s).
\end{align}

\item The GFT of $x(t)$ divided by $t$, provided $\lim_{t \to 0}\left(\frac{x(t)}{t}\right)$ exists:
\begin{align}
    \mathcal{G}\left\{\frac{x(t)}{t}\right\} & =-\int_{s^*}^\infty \mathfrak{X}(v) \, \mathrm{d}v+\int_{s}^\infty {X}(u) \, \mathrm{d}u,\\
    \mathcal{G}\left\{\frac{x(t)}{|t|}\right\} & =\int_{s^*}^\infty \mathfrak{X}(v) \, \mathrm{d}v+\int_{s}^\infty {X}(u) \, \mathrm{d}u.
\end{align}

\item Time scaling:
\begin{align}
 \mathcal{G}\{x(at)\} & =\frac{1}{a}\mathfrak{X}\left(\frac{s^*}{a}\right)+\frac{1}{a}X\left(\frac{s}{a}\right), \quad a>0, \quad \text{ROC is } R/a\\
 \mathcal{G}\{x(-at)\} & =\frac{1}{a}X\left(\frac{s^*}{a}\right)+\frac{1}{a}\mathfrak{X}\left(\frac{s}{a}\right), \quad a>0 \quad \text{ROC is } R/a.
\end{align}

\item Time reversal:
\begin{align}
    \mathcal{G}\{x(-t)\} & =X(s^*)+\mathfrak{X}(s), \quad \text{ROC is } R.
\end{align}

\item Time delay: Consider writing $x(t)$ as:  $x(t)=x(t)\,\mathrm{u}(-t)+x(t)\,\mathrm{u}(t)$. This implies that $x(t-\sgn(t) \, t_0)=x(t+t_0)\,\mathrm{u}(-t-t_0)+x(t-t_0)\,\mathrm{u}(t-t_0)$ and thus, 
\begin{align}
    \mathcal{G}\{x(t-\sgn(t) \, t_0)\} & =e^{-s^* \, t_0}\mathfrak{X}(s^*)+e^{-s \, t_0}X(s), \quad t_0\ge0, \quad \text{ROC is } R,
\end{align}
and
\begin{align}
    \mathcal{G}\{x(t-t_0)\} & =\int_{-\infty}^{0} x(t-t_0) \, e^{s^*t} \, \mathrm{d}t+\int_{0}^{\infty}x(t-t_0)\,e^{-st} \, \mathrm{d}t, \quad t_0\in \mathbb{R},\nonumber\\
    & =e^{s^* \, t_0}\int_{-\infty}^{-t_0} x(t) \, e^{s^*t} \, \mathrm{d}t+e^{-s \, t_0}\int_{-t_0}^{\infty}x(t)\,e^{-st} \, \mathrm{d}t.
\end{align}

\end{enumerate}

\subsection{Solution of IVPs using the proposed GFT}
In this subsection, we demonstrate the utility of GFT for solving the initial value problems (IVP's). Let us consider the IVP as
\begin{align}
    x''(t)+\omega_c^2 \, x(t) = c_1 \, \delta(t-t_0) + c_2 \, \delta(t+t_0), \quad x(0)=a_1, \, x'(0)=a_2 \, \omega_c, \, t_0\ge 0, \label{de1}
\end{align}
where constants $a_1,\, a_2,\, c_1,\, c_2 \in \mathbb{C}$, $t,\,\omega_c \in \mathbb{R}, \, \omega_c\ne 0$. Using the differentiation property of the proposed GFT \eqref{dp1}, we obtain
\begin{align}
s^2 X(s) & - s\,x(0) -x'(0)+\omega_c^2\, X(s)=c_1\, e^{-t_0\, s},  \quad \text{ for } t\ge 0\\
 X(s)&= a_1 \, \frac{s}{s^2+\omega_c^2} + a_2 \, \frac{\omega_c}{s^2+\omega_c^2} + c_1 \, e^{-s t_0}
 \frac{1}{s^2+\omega_c^2}, \quad \text{ for } t\ge 0,\\
 x(t)&= \left[a_1 \, \cos(\omega_c t) + a_2 \, \sin(\omega_c t)\right] \mathrm{u}(t) + \frac{c_1}{\omega_c} \, \sin\left((t-t_0)\omega_c \right) \, \mathrm{u}(t-t_0),
 \label{de2}
\end{align}
and 
\begin{align}
(s^*)^2 \mathfrak{X}(s^*) & - s^*\,x(0) +x'(0)+\omega_c^2\, \mathfrak{X}(s^*)=c_2\, e^{-t_0\, s^*}, \text{ for } t\le 0,\\
 \mathfrak{X}(s^*)&= a_1 \, \frac{s^*}{(s^*)^2+\omega_c^2} - a_2 \, \frac{\omega_c}{(s^*)^2+\omega_c^2} + c_2 \, e^{- t_0 \, s^*}
 \frac{1}{(s^*)^2+\omega_c^2}, \quad \text{ for } t\le 0,\\
 x(t)&= \left[a_1 \, \cos(\omega_c t) + a_2 \, \sin(\omega_c t)\right] \mathrm{u}(-t) - \frac{c_2}{\omega_c} \, \sin\left((t+t_0)\omega_c \right) \mathrm{u}(-t-t_0).
 \label{de3}
\end{align}
Therefore, from \eqref{de2} and \eqref{de3}, we can write, $x(t)$, $\forall t\in \mathbb{R}$, as
\begin{align}
x(t)&= a_1 \, \cos(\omega_c t) + a_2 \, \sin(\omega_c t) +\frac{c_1}{\omega_c} \, \sin\left((t-t_0)\omega_c \right) \mathrm{u}(t-t_0) - \frac{c_2}{\omega_c} \, \sin\left((t+t_0)\omega_c \right) \mathrm{u}(-t-t_0).
 \label{de4}
\end{align}
It is to be noted that this kind of IVP's cannot be solved using the traditional Fourier and Laplace transforms. 

\subsubsection{Solution of the IVPs using the FT}
We, hereby, observe that when $\sigma = 0$ (or $ \lim \sigma \to 0$), then GFT will become the FT with $s=j\omega$, $s^*=-j\omega$ and thus, FT can be used for finding the solution of IVPs by adopting the differentiation property discussed in \eqref{dp1}, provided the FT of the functions under consideration exist. Thus, to obtain the solution of IVPs, we derive and propose the following derivative property of the FT as
    \begin{align}
    \mathcal{G}\{x'(t)\} & =\left[s\mathfrak{X}(s)+x(0)\right]+\left[sX(s)-x(0)\right], \label{dpft00}\\
    \mathcal{G}\{x''(t)\} & =\left[s^2 \mathfrak{X}(s)+s\,x(0)+x'(0)\right]+\left[s^2X(s)-s\,x(0)-x'(0)\right], \label{dpft0}
\end{align}
and for the $m$th derivatives, we obtain
\begin{align}
    \mathcal{G}\{x^{(m)}(t)\} & =\left[s^m \mathfrak{X}(s)+\sum_{i=1}^m s^{(m-i)} x^{(i-1)}(0)\right]+\left[s^m X(s)-\sum_{i=1}^m s^{(m-i)} x^{(i-1)}(0)\right], \label{dpft1}
\end{align}
where, $s=j\omega$, $\mathfrak{X}(s)=\mathfrak{X}(\omega)=\int_{-\infty}^0 x(t) \, e^{-j\omega t} \, \mathrm{d}t$, $X(s)=X(\omega)=\int_{0}^\infty x(t) \, e^{-j\omega t} \, \mathrm{d}t$, $x'(0)=\frac{\mathrm{d}}{\mathrm{d}t} x(t)\Big|_0$ and $x^{(m)}(0)=\frac{\mathrm{d}^m}{\mathrm{d}t^m} x(t)\Big|_0$. Here, the main observation from \eqref{dpft00}, \eqref{dpft0} and \eqref{dpft1} is that, in general, the initial conditions for positive value of time cancel the initial conditions for negative value of time. This is the main difficulty in solving the IVPs in the traditional way using the FT. Thus, to solve the IVPs, we should avoid the cancellation and retain the initial conditions as presented in \eqref{dpft00}, \eqref{dpft0} and \eqref{dpft1}.  

\subsection{Generalized Fourier transform and polynomially decaying functions}
The Fourier and Laplace transforms of the polynomially decaying functions do not exist. Therefore, in order to tackle the polynomial decay in the function, our next goal is to define and explore the generalized Fourier transform (GFT) as 
\begin{equation}
	\begin{aligned}
		X(\omega,\sigma,p)&=\int_{-\infty}^{\infty} x(t) \, |t|^p \exp(-\sigma \, |t|) \exp(-j\omega t) \ud t,
	\end{aligned}\label{pGFT2}
\end{equation} 
which can be written as
\begin{equation}
	\begin{aligned}
		X(s,s^*,p)=\mathfrak{X}(s^*,p)+X(s,p)&=\int_{0}^{\infty} x(-t) \, t^p \exp(-s^* t) \ud t+\int_{0}^{\infty} x(t)  \, t^p \exp(-s t) \ud t.
	\end{aligned}\label{pGFT4}
\end{equation}
Using the proposed definition \eqref{pGFT2}, we can easily include the polynomially decaying class of functions. 
For example, considering $x(t)=\frac{1}{t^m}, \, m \ge 1$, we obtain
\begin{equation}
	\begin{aligned}
		X(s,s^*,p)= \Gamma(p-m+1)\left[\frac{1}{(-1)^m}\frac{1}{(s^*)^{p-m+1}}+\frac{1}{s^{p-m+1}}\right], \quad p>m-1.
	\end{aligned}\label{pGFT5}
\end{equation}
To derive the Fourier and Laplace transforms of the HT kernel, we set $m=1$ in \eqref{pGFT5} and obtain  
\begin{equation}
	\begin{aligned}
		X(\omega,\sigma,p)= \Gamma(p)\left[\frac{1}{s^{p}}-\frac{1}{(s^*)^{p}}\right]=\Gamma(p)\left[\frac{(\sigma-j\omega)^{p}-(\sigma+j\omega)^{p}}{(\sigma^2+\omega^2)^p}\right].
	\end{aligned}\label{pGFT6}
\end{equation}
We can further simplify \eqref{pGFT6} by considering $\lim \sigma \to 0$ and obtain  
\begin{equation}
	\begin{aligned}
		X(\omega,p)= \Gamma(p) \left[\frac{(-j\,\omega)^{p}-(j\,\omega)^{p}}{\omega^{2p}}\right],
	\end{aligned}\label{pGFT7}
\end{equation}
which can be written as
\begin{equation}
	\begin{aligned}
		X(\omega,p)=\Gamma(p) \frac{(-2j)}{ \omega^p} \sin\left(\frac{p\pi}{2}\right) , \, \omega >0 \text { and } X(\omega,p)=\Gamma(p) \frac{2j}{\omega^p} \sin\left(\frac{p\pi}{2}\right), \, \omega <0.
	\end{aligned}\label{pGFT8}
\end{equation}
In \eqref{pGFT7}, for $p=1$, $X(\omega)=-2j/\omega$, which is the FT of the signum function. From \eqref{pGFT8} we observe that, $\lim_{p \to 0} X(\omega,p)=- j \, \pi \, \sgn(\omega)$. Thus, we obtain the FT of the HT kernel as $\frac{1}{\pi t} \rightleftharpoons - j \, \sgn(\omega)$. This is the direct derivation of the FT of the HT kernel without using the duality property.      

\subsection{The GFT of super-exponential functions}
The Fourier and Laplace transforms of more than exponentially growing functions (such as $e^{at^m},\, m>1,\, a>0$) do not exist. 
Therefore, to include this class of functions, we define GFT as 
\begin{equation}
	\begin{aligned}
		X(\omega,\sigma,p,q)&=\int_{-\infty}^{\infty} x(t) \, |t|^p \exp(-\sigma \, |t|^q) \exp(-j\omega t) \ud t, \qquad p \ge 0, \, q \ge 0.
	\end{aligned}\label{peGFT2}
\end{equation} 
Using the proposed definition \eqref{peGFT2}, we can easily include the class of functions growing at a rate faster than the exponential functions. However, this may be limited in applications because evaluation of such integrals would be quite tedious. Although GFT \eqref{peGFT2} is quite general, it is not sufficient to deal with the class of functions that explodes in a limited time support such as $\tan(\omega t)$, $\cot(\omega t)$, $\sec(\omega t)$, $\csc(\omega t)$, and $e^{1/(1-t)}$.

We can also obtain the GFT of a function $x(t)$ by using the Taylor series expansion (TSE) and computing the GFT of each term of the series, provided the resultant series $X(s,s^*)$ in the GFT domain converges. For example, we consider the TSE of $x(t)=e^{t}=\sum_{m=0}^\infty \frac{1}{m!} t^{m}$ and obtain its GFT as $\mathcal{G}\{e^{t}\}=X(s,s^*)=\sum_{m=0}^\infty \left[\frac{(-1)^m}{(s^*)^{m+1}}+\frac{1}{s^{m+1}}\right]=\frac{1}{s^*+1}+\frac{1}{s-1}, \quad \text{Re}\{s\}>0$ and $|s|>1$, or $\text{Re}\{s\}>1$.

\section{Extension of the proposed methodology}
\subsection{Integral transforms}
The integral transform (IT) of a function $x(t)$ is defined as
\begin{align}
    X(\omega,\sigma)=\int_{t_1}^{t_2} x(t) \, K(\omega,\sigma,t) \ud t, \label{IT1}
\end{align}
where $K(\omega,\sigma,t)$ is the kernel of the considered IT. The original signal $x(t)$ can be recovered, if there exists an inverse kernel $K^{-1}(\omega,\sigma,t)$ as
\begin{align}
  x(t)  =\int_{u_1}^{u_2}  X(\omega,\sigma) \, K^{-1}(\omega,\sigma,t) \ud \omega. \label{IT2}
\end{align}
The IT \eqref{IT1} and its inverse \eqref{IT2} may not exist for all possible signals. Therefore, similar to the GFT \eqref{GFT1}, we can define generalized IT (GIT) as
\begin{align}
    X(\omega,\sigma)=\int_{t_1}^{t_2} x(t)\, h(\sigma,t) \, K(\omega,\sigma,t) \ud t, \\
    x(t)\, h(\sigma,t)  =\int_{u_1}^{u_2}  X(\omega,\sigma) \, K^{-1}(\omega,\sigma,t) \ud \omega, \label{IT3}
\end{align}
where we may chose, e.g., $h(\sigma,t)=\exp(-\sigma|t|)$ in such a fashion that IT exists for the larger class of functions. There are many integral transforms such as fractional FT (FrFT), continuous wavelet transform (CWT), linear canonical transform (LCT), Abel transform, Fourier sine and cosine transform, Hankel transform, Hartley transform, Hermite transform, Hilbert transform, Jacobi transform, Laguerre transform, Legendre transform, Mellin transform, Poisson kernel, Radon Transform, Weierstrass transform, S-transform, etc. %The computation of moments of a distribution, where signal is required to be in $L^1(\mathbb{R})$ (i.e., $\int_{-\infty}^{\infty}|x(t)| \, \mathrm{d}t < \infty$) and/or $L^2(\mathbb{R})$ (i.e., $\int_{-\infty}^{\infty}|x(t)|^2 \, \mathrm{d}t < \infty$). 
Similar to the GFT \eqref{GFT2} and \eqref{pGFT2}, we can extend the proposed methodology and include that class of signals which are not in $L^p(\mathbb{R}), \, p\in(0,\infty)$. For example, the CWT of a signal $x(t)$ is defined if $x(t) \in L^2(\mathbb{R})$ and the original can be recovered by inverse CWT \cite{awt0,awt1,awt2, awt3,awt4}. We hereby define the CWT of a signal $x(t)$ such that $x(t)\not\in L^2(\mathbb{R})$ and $x(t) \exp(-\sigma \, |t|) \in L^2(\mathbb{R})$ for some $\sigma>\sigma_0$ as
\begin{equation}
	\begin{aligned}
		X_\psi(a,b,\sigma)&=\int_{-\infty}^{\infty} x(t)  \exp(-\sigma \, |t|) \, \psi^*_{a,b}(t) \, \mathrm{d} t, \qquad a \ne 0,
	\end{aligned}\label{CWT1}
\end{equation}
and $X_\psi(a,b,\sigma)=0$ for $a=0$, where $\psi_{a,b}(t)=\frac{1}{\sqrt{|a|}}\psi\left(\frac{t-b}{a}\right)$ is a family of daughter wavelets and $\psi(t)\in L^2( \mathbb{R})$ is a mother wavelet function with zero-mean and finite energy. The original signal $x(t)$ can be recovered as
\begin{equation}
	\begin{aligned}
	x(t)	&=\frac{\exp(\sigma \, |t|)}{C_\psi}\int_{-\infty}^{\infty}\int_{-\infty}^{\infty}  X_\psi(a,b,\sigma) \, \psi_{a,b}(t) \, \mathrm{d} b \frac{1}{a^2} \mathrm{d} a, \qquad C_\psi \ne 0,
	\end{aligned}\label{CWT2}
\end{equation}
where the wavelet $\psi(t)\rightleftharpoons \Psi(\omega)$ satisfies the admissibility condition \cite{awt0,awt1,awt2, awt3,awt4}
\begin{align}
 C_\psi= \int_{-\infty}^{\infty} \frac{|\Psi(\omega)|^2}{|\omega|}  \mathrm{d} \omega < \infty.
\end{align}
We present another example of an integral transform where the Fourier cosine transform (FCT) and inverse FCT pairs can be defined for the larger classes of signals as
\begin{align}
 X_c(\omega,\sigma)&= \sqrt{\frac{2}{\pi}}\int_{0}^{\infty} x(t)\, e^{-\sigma t} \cos(\omega t)  \mathrm{d} t, \quad \omega\ge0, \, \sigma>0,\\
 x(t)&= \sqrt{\frac{2}{\pi}}\int_{0}^{\infty} X_c(\omega,\sigma)\, e^{\sigma t} \cos(\omega t)  \mathrm{d} \omega, \quad t\ge0,
\end{align}
and when $\lim \sigma \to 0$, this becomes the original FCT and IFCT. 
\subsection{Fourier Scale Transform (FST)}
The Mellin transform (MT) and inverse MT (IMT) are derived from the BLT \eqref{LT1} by the change of the variable as $e^{-t}=\tau\implies t= -\ln(\tau)$ and is defined as
\begin{equation}
\begin{aligned}
	Y(s)&=\int_{0}^{\infty} y(\tau) \, \tau^{s-1} \ud \tau,\\
	\text{ and } \quad y(t) &=\frac{1}{2\pi j}\int_{\sigma-j\infty}^{\sigma + j\infty} Y(s) \, \tau^{-s} \, \mathrm{d} s,
\end{aligned}\label{MT1}
\end{equation}
where $x(-\ln(\tau))=y(\tau)$. We can define the lower and upper partial MT as
\begin{align}
    Y(s)=Y_L(s,\tau_c)+Y_U(s,\tau_c)=\int_{0}^{\tau_c} y(\tau) \, \tau^{s-1} \ud \tau+\int_{\tau_c}^{\infty} y(\tau) \, \tau^{s-1} \ud \tau. \label{IMT}
\end{align}
The MT and IMT presented in \eqref{MT1} are essentially Fourier/Laplace counterparts processed through an isomorphism. MT is a different realization of the BLT by the topological isomorphic exponential mapping, $\exp \colon (\mathbb{R,+}) \to (\mathbb{R^{+},\cdot})$. Therefore all the limitations of the BLT, as already discussed, are enforced to the MT as well. For example, MT of a simple function $y(\tau)=1$ does not exists.

To overcome these limitations, using the GFT \eqref{GFT2}, by the change of the variable as $e^{-t}=\tau\implies t= -\ln(\tau)$, we define the FST and inverse FST (IFST) as
\begin{equation}
\begin{aligned}
	Y(s,s^*)=Y_L(s,1)+Y_U(-s^*,1)&=\int_{0}^{1} y(\tau) \, \tau^{s-1} \ud \tau + \int_{1}^{\infty} y(\tau) \, \tau^{-s^*-1} \ud \tau,\\
	y(t) &=\frac{1}{2\pi j}\int_{\sigma-j\infty}^{\sigma + j\infty} Y_L(s,1) \, \tau^{-s} \, \mathrm{d} s, \quad 0\le t \le 1,\\
	\text{ and } \quad y(t) &=\frac{1}{2\pi j}\int_{\sigma-j\infty}^{\sigma + j\infty} Y_U(-s^*,1) \, \tau^{s^*} \, \mathrm{d} s, \quad 1\le t < \infty.
\end{aligned}\label{FST1}
\end{equation}

Here, we consider a very interesting example of the MT. The gamma function is the MT of $y(\tau)=e^{-\tau}$ and is defined as
\begin{align}
    \Gamma(s)=\int_{0}^{\infty} e^{-\tau} \, \tau^{s-1} \ud \tau=(s-1)\Gamma(s-1),
    \label{GF01}
\end{align}
which can be written as the sum of the lower and upper partial gamma functions as
\begin{align}
    \Gamma(s)=\Gamma_L(s,\tau_c)+\Gamma_U(s,\tau_c)=\int_{0}^{\tau_c} e^{-\tau} \, \tau^{s-1} \ud \tau+\int_{\tau_c}^{\infty} e^{-\tau} \, \tau^{s-1} \ud \tau. \label{GF02}
\end{align}
The gamma function \eqref{GF01} can also be realized by the BLT of the function $x(t)=e^{-e^{-t}}$ as
\begin{align}
\Gamma(s)=\int_{-\infty}^{\infty} e^{-e^{-t}} \, e^{-s\,t} \, \mathrm{d} t= \frac{\Gamma(s+1)}{s}, \quad s\ne 0. \label{GF03}  
\end{align}

\noindent Next, we consider the FST of the function $y(\tau)=e^{-\tau}$ and obtain
\begin{align}
    Y(s,s^*)=\int_{0}^{1} e^{-\tau} \, \tau^{s-1} \ud \tau+\int_{1}^{\infty} e^{-\tau} \, \tau^{-s^*-1} \ud \tau,
    \label{MFT01}
\end{align}
which can be written in terms of the incomplete gamma functions as
\begin{align}
    Y(s,s^*)=\Gamma_L(s,1)+\Gamma_U(-s^*,1).
    \label{MFT02}
\end{align}

\noindent Next, we derive the properties of the proposed FST that are listed below.
\begin{enumerate}
\item Scaling:
\begin{align}
    \text{ if } \quad \mathcal{F}\{y(t)\} & =Y_L(s,1)+Y_U(-s^*,1),\nonumber\\
    \text{ then } \quad \mathcal{F}\{y(at)\} & =\frac{1}{a^s}Y_L(s,1)+a^{s^*}Y_U(-s^*,1).
\end{align}

\item Multiplication by $t^m$:
\begin{align}
    \mathcal{F}\{t^m y(t)\} & =Y_L(m+s,1)+Y_U(m-s^*,1),
\end{align}
which implies, $s \mapsto s+m$, and, $s^* \mapsto s^*-m$. 

\item The FST of derivatives:
    \begin{align}
    \mathcal{F}\{y'(t)\} & =\left[y(1)-(s-1)Y_L(s-1,1)\right]+\left[(s^*+1)Y_U(-s^*-1,1)-y(1)\right],\\
    \mathcal{G}\{y''(t)\} & =\left[y'(1)-(s-1)y(1)+(s-1)(s-2)Y_L(s-2,1)\right]\nonumber \\ 
    &+\left[-y'(1)-(s^*+1)y(1)+(s^*+1)(s^*+2)Y_U(-s^*-2,1)\right],\\
    \mathcal{F}\{t y'(t)\} & =\left[y(1)-sY_L(s,1)\right]+\left[s^*Y_U(-s^*,1)-y(1)\right],\\
    \mathcal{F}\{t^2 y''(t)\} & =\left[y'(1)-(s+1)y(1)+s(s+1)Y_L(s,1)\right] \nonumber\\
    &+\left[-y'(1)-(s^*-1)y(1)+s^*(s^*-1)Y_U(-s^*,1)\right].
    \end{align}

It is also observed that 

\begin{align}
\frac{\partial^m}{\partial{s}^m } Y(s,s^*)&=\int_0^1 (\ln{t})^m \, y(t)\, t^{s-1} \ud t,\\ \text{ and } \quad
    \frac{\partial^m}{\partial{s^*}^m } Y(s,s^*)&= \int_1^{\infty} (-\ln{t})^m \, y(t)\, t^{-s^*-1} \ud t.
\end{align}

\end{enumerate}

{\renewcommand{\arraystretch}{1.5}%
\begin{table}[h]	
	\begin{center}
	\caption{Some signals and their FST along with the information on whether MT exists. We observe that the ROC is the entire $s$-plane, if there are no poles; otherwise, it is always right-side of the rightmost pole. Also, the FT exists, if the ROC includes the imaginary axis $s= j\omega$.} \label{tabMT1}
	\begin{tabular}{|l|l|l|}
		\hline
		Signals & Proposed FST and ROC & MT \\ \hline
	%$\delta(t)$	&  $0$,      all $s$ & Yes  \\
	$\delta(t-t_0), \, t_0\in(0,1]$	&  $t_0^{s-1}$,      all $s$ & Yes  \\	
	
	$\delta(t-t_0), \, t_0\in[1,\infty)$	&  $t_0^{-s^*-1}$,      all $s$ & Yes  \\	
	
	$e^{-at}\,\mathrm{u}(t)$, $a>0$	&  $\frac{1}{a^s}\Gamma_L(s,a)+{a^{s^*}}\Gamma_U(-s^*,a)$ & Yes\\ 
	
	${e^{-t}}\,\mathrm{u}(t)$	&  $\Gamma_L(s,1)+ \Gamma_U(-s^*,1)$, $\text{Re}\{s\}>0$ & Yes\\

	$\frac{1}{e^t-1}\,\mathrm{u}(t)$	&  $\sum_{\ell=1}^{\infty} \left[ \ell^{-s}\,\Gamma_L(s,\ell)+\ell^{s^*}\, \Gamma_U(-s^*,\ell)\right]$, $\text{Re}\{s\}>0$ & Yes\\

	$\mathrm{u}(t)$	&  $\frac{1}{s}+\frac{1}{s^*}$,      $\text{Re}\{s\}>0$ & No  \\

	$t^{a}\,\mathrm{u}(t)$	&  $\frac{1}{s+a}+\frac{1}{s^*-a}$, $\text{Re}\{s\}>\text{Re}\{a\}$ & No\\

	\hline
	\end{tabular} 
\end{center}
\end{table}}

\subsection{Generalized Gamma function}
The gamma function was introduced by Leonhard Euler (1707--1783) to generalize the factorial to non integer values. The gamma function (GF) is defined (Euler, 1730) as
\begin{align}
    \Gamma(s)=\int_0^\infty t^{s-1} \, e^{-t} \, \mathrm{d} t, \quad \text{Re}\{s\}>0, \label{gamma1}
\end{align}
where the above integral converges for $t \in [0,\infty)$.
Using the iteration of the gamma identity as $\Gamma(s+1)=s\Gamma(s)$, this function can be defined on the entire real axis and can be extended to the entire complex plane except for $s=0$ and for $s$ as negative integers, i.e., $(s=0,-1,-2, \dots)$ \cite{InGamma}. 
Motivated from above and the theory of GFT, we define the generalized gamma function (GGF) as
\begin{align}
    G(s)=\int_{-\infty}^\infty t^{s-1} \, e^{-\sgn(t) \, t} \, \mathrm{d} t, \quad \text{Re}\{s\}>0, \label{gamma2}
\end{align}
where the above integral converges for all $t \in \mathbb{R}$. The GGF \eqref{gamma2} can be written as
\begin{align}
    G(s)=\Gamma_c(s)+\Gamma(s)=\int_{0}^\infty (-t)^{s-1} \, e^{-t} \, \mathrm{d} t + \int_{0}^\infty t^{s-1} \, e^{-t} \, \mathrm{d} t  \label{gamma3}
\end{align}
where we can write complementary GF as $\Gamma_c(s)=(-1)^{(s-1)}\Gamma(s)=e^{j\pi(s-1)}\Gamma(s)$. Thus, $|\Gamma_c(s)|=|\Gamma(s)|$. The original GF $\Gamma(s)$ has no zeroes. It has simple poles at zero and at the negative integers (i.e., $s=0,-1,-2,-3, \dots$). However, the proposed GGF $G(s)=\left[(-1)^{s-1}+1\right]\Gamma(s)$ has zeros that occur for even numbers, i.e., for $s=2m, \, m\in \mathbb{Z}$, $G(0)=0$; and $G(s)=2\Gamma(s)$ for $s=2m+1, \, m\in \mathbb{Z}$. It has poles at negative odd integers. For all other values of $s$, it is a complex-valued function. Thus, the GGF is defined over the entire complex plane except at the negative odd integers.

\subsection{Application: Moment generating functions}
Using the traditional LT, the moment generating function (MGFs) are defined only for the pdfs of positive random variables and for finite duration pdfs (such as the uniform pdf). Since all random variables do not have finite moments, MGFs do not always exist. For example, consider a Cauchy random variable, whose moments are not defined because the  integral
\begin{equation}
	\begin{aligned}
		\mathcal{M}_m&= \frac{1}{\pi}\int_{-\infty}^{\infty} \frac{y^m}{1+y^2} \, \mathrm{d} y, \quad m=1,2,\dots,M
	\end{aligned}\label{cauchypdf}
\end{equation} 
do not converge. 
%for the random variables , and therefore we use a characteristic function of a distribution rather than MGF. 
In this Subsection, using the theory of GFT, we show that MGF always exists, even for random variables with non-finite moments.

The moments of a random variable $Y$ are defined as
\begin{align}
E\{Y^m\}=\mathcal{M}_m= \int_{-\infty}^{\infty} y^m f(y) \, \mathrm{d} y, \quad m=1,2,\dots,M,
\end{align}
that can be written as \begin{align}
\mathcal{M}_m= \mathcal{M}_{nm} + \mathcal{M}_{pm}= \int_{0}^{\infty} (-y)^m f(-y) \, \mathrm{d} y+\int_{0}^{\infty} y^m f(y) \, \mathrm{d} y, 
\end{align}
where $\mathcal{M}_{nm}= \int_{0}^{\infty} (-y)^m f(-y) \, \mathrm{d} y$, $ \mathcal{M}_{pm}=\int_{0}^{\infty} y^m f(y) \, \mathrm{d} y$, $f(y)$ is the probability density function (pdf), and $E$ is the expectation operator.
There are many heavy-tailed pdfs whose moments are not finite. To deal with this issue, we define a new moment generating function
as
\begin{align}
M_Y(\sigma,\omega)= \int_{-\infty}^{\infty} f(y) \exp(-\sigma |y|)  \exp(-j\omega y) \, \mathrm{d} y,
\end{align}
that can be further simplified as 
\begin{align}
M_Y(s^*,s)= \int_{0}^{\infty} f(-y)  \exp(-s^* y) \, \mathrm{d} y + \int_{0}^{\infty} f(y)  \exp(-s y) \, \mathrm{d} y. \label{mgf1}
\end{align}
From \eqref{mgf1}, we observe the following moment generation property
\begin{align}
M_Y(s^*,s)= \sum_{m=0}^\infty \frac{(s^*)^m}{m!} \mathcal{M}_{nm}   +  \sum_{m=0}^\infty (-1)^m \frac{s^m}{m!} \mathcal{M}_{pm}   . \label{mgf2}
\end{align}
Thus, 
%similar to the characteristic functions that always exist, 
the proposed MGFs \eqref{mgf2} always exist. Similarly, for a Cauchy random variable, we can define other kind of moments as    
\begin{equation}
	\begin{aligned}
		\mathcal{M}_m&= \frac{1}{\pi}\int_{-\infty}^{\infty} \frac{y^m \exp(-\sigma |y|)}{1+y^2} \, \mathrm{d} y, \quad \sigma>0,
	\end{aligned}\label{cauchypdf1}
\end{equation} 
 which can be written as
\begin{equation}
	\begin{aligned}
		\mathcal{M}_m&=  \left[(-1)^m +1\right] \frac{1}{\pi}\int_{0}^{\infty} \frac{y^m \exp(-\sigma y)}{1+y^2} \, \mathrm{d} y= \frac{2}{\pi}\int_{0}^{\infty} \frac{y^m \exp(-\sigma y)}{1+y^2} \, \mathrm{d} y, \quad \text{ for } m=2,4,6,\dots,
	\end{aligned}\label{cauchypdf2}
\end{equation} 
 and it is zero for the odd values of $m$. The integral \eqref{cauchypdf2} converges for $m>1$ and its solution can be written in terms of the generalized hypergeometric function.

 \section{Generalized discrete time Fourier transform (GDTFT)}
 The discrete time Fourier transform (DTFT) and inverse DTFT pairs of a signal $x[n]$ are defined as
\begin{equation}
	\begin{aligned}
		X(\Omega)&=\sum_{n=-\infty}^{\infty} x[n] \exp(-j\Omega n), \\
		x[n]&={\frac{1}{2\pi}}\int_{-\pi}^{\pi}  X(\Omega) \exp(j\Omega n) \ud \Omega.
	\end{aligned}\label{DFT1}
\end{equation}
 
 Corresponding to GFT, we hereby define the generalized discrete time Fourier transform (GDTFT) as 
%  \footnote{We can also define as $X(\Omega,\sigma)=\sum_{n=-\infty}^{\infty} x[n-n_0] \exp(-\sigma \, |n-n_0)|) \exp(-j\Omega n)$, which can be written as $X(\Omega,\sigma)=\sum_{n=-\infty}^{-1-n_0} x[n] \exp((\sigma-j\Omega) n)+\sum_{n=n_0}^{\infty} x[n] \exp(-(\sigma+j\Omega) n), \,
%  	X(z,z^*)=\mathfrak{X}(z^*)+X(z)=\sum_{n=-\infty}^{-1-n_0} x[n] (z^*)^{n}+\sum_{n=n_0}^{\infty} x[n] z^{-n}$, and without loss of generality we can set $n_0=0$.}
 \begin{equation}
 	\begin{aligned}
 		X(\Omega,\sigma)&=\sum_{n=-\infty}^{\infty} x[n] \exp(-\sigma \, \sgn(n)\, n) \exp(-j\Omega n),
 		\end{aligned}\label{DFT2}
 \end{equation}
 which can be written as 
 \begin{align}
 X(\Omega,\sigma)&=\sum_{n=-\infty}^{-1} x[n] \exp((\sigma-j\Omega) n)+\sum_{n=0}^{\infty} x[n] \exp(-(\sigma+j\Omega) n),\\
 	X(z,z^*)&=\mathfrak{X}(z^*)+X(z)=\sum_{n=1}^{\infty} x[-n] (z^*)^{-n}+\sum_{n=0}^{\infty} x[n] z^{-n}, \label{gz1}
 \end{align}
 where $X(z)$ is the unilateral $z$-transform defined as
 \begin{equation}
 	X(z)=\sum_{n=0}^{\infty} x[n] z^{-n}, \label{gz2}
 \end{equation}
 and complementary $z$-transform $\mathfrak{X}(z^*)$ is hereby defined as   
 \begin{equation}
 	\mathfrak{X}(z^*)=\sum_{n=-\infty}^{-1} x[n] (z^*)^{n}=\sum_{n=1}^{\infty} x[-n] (z^*)^{-n}=\sum_{n=0}^{\infty} x[-n] (z^*)^{-n}-x[0], \label{gz3}
 \end{equation}
 where $z=e^{\sigma+j\Omega}=re^{j\Omega}$, its complex conjugate $z^*=e^{\sigma-j\Omega}=re^{-j\Omega}$, $r=e^\sigma$. For $\sigma=0\implies r=1$, GFZT becomes the discrete time Fourier transform (DTFT). 
 The bilateral $z$-transform (BL$z$T) is defined as
\begin{equation}
	X_B(z)=\sum_{n=-\infty}^{\infty} x[n] z^{-n}. \label{bgz2}
\end{equation}

\noindent We can obtain the original signal from the GDTFT by using the inverse GDTFT as 
\begin{align}
 x[n]\exp(-\sigma \, \sgn(n)\,n)&={\frac{1}{2\pi}}\int_{-\pi}^{\pi}  X(\Omega,\sigma) \exp(j\Omega n) \ud \Omega, \label{igdtft}  
\end{align}
and then evaluating \eqref{igdtft} with $\lim \sigma \to 0$.

Next, we present Table II where some signals and their GF$z$T are presented.
{\renewcommand{\arraystretch}{1.5}%
	\begin{table}[h]	 
		\begin{center}
		\caption{Some signals and their GDTFT along with the information whether DTFT and/or BL$z$T exist. We observe that the ROC is the entire $z$-plane, if there are no poles; otherwise, it is always outside of the outermost pole. Also, the DTFT exists, if the ROC includes unit circle  $|z|=1$.} \label{tab2}
			\begin{tabular}{|l|l|l|}
				\hline
				Signals & Proposed GDTFT and ROC & DTFT/BL$z$T \\ \hline
				$\delta[n]$	&  $1$,      all $z$ & Yes/Yes  \\
				$\delta[n-n_0], \, n_0>0$	&  $z^{-n_0}=r^{-n_0}\exp[-j\Omega n_0]$,       $z\ne 0$ & Yes/Yes  \\	
				$\delta[n+n_0], \, n_0>0$	&  $(z^*)^{n_0}=r^{-n_0}\exp[j\Omega n_0]$,       $z \ne 0$ & Yes/Yes  \\	
				1	&  $\frac{(z^*)^{-1}}{1-(z^*)^{-1}}+\frac{1}{1-z^{-1}}$,      $|z|>1$ & Yes/No  \\ 
				$\sgn[n]$	&  $\frac{-(z^*)^{-1}}{1-(z^*)^{-1}}+\frac{z^{-1}}{1-z^{-1}}$,      $|z|>1$ & Yes/No  \\ 
				$\mathrm{u}[n]$	&  $\frac{1}{1-z^{-1}}$,      $|z|>1$ & Yes/Yes  \\ 
				$\mathrm{u}[-n]$	&  $\frac{1}{1-(z^*)^{-1}}$,      $|z|>1$ & Yes/Yes  \\ 		
				$a^{n}$	&  $\frac{(az^*)^{-1}}{1-(az^*)^{-1}}+\frac{1}{1-az^{-1}}$,      $|z|>|a|\, \bigcap \, |z|>1/|a|$ & No/No \\
				
				$a^{|n|}$	&  $\frac{(a^{-1}z^*)^{-1}}{1-(a^{-1}z^*)^{-1}}+\frac{1}{1-az^{-1}}$,      $|z|>|a|$ & No/No \\
				
				$e^{j\Omega_0 n}$	&  $\frac{-(z^*)^{-2}+(z^*)^{-1}\cos(\Omega_0)-j(z^*)^{-1}\sin(\Omega_0)}{1-2(z^*)^{-1}\cos(\Omega_0)+(z^*)^{-2}} + \frac{1-z^{-1}\cos(\Omega_0)+jz^{-1}\sin(\Omega_0)}{1-2z^{-1}\cos(\Omega_0)+z^{-2}}$,      $|z|>1$ & Yes/No \\	
				
				$\cos(\omega_0 n)$	&  $\frac{-(z^*)^{-2}+(z^*)^{-1}\cos(\Omega_0)}{1-2(z^*)^{-1}\cos(\Omega_0)+(z^*)^{-2}} + \frac{1-z^{-1}\cos(\Omega_0)}{1-2z^{-1}\cos(\Omega_0)+z^{-2}}$,      $|z|>1$ & Yes/No  \\	
				$\sin(\omega_0 n)$	&  $\frac{-(z^*)^{-1}\sin(\Omega_0)}{1-2(z^*)^{-1}\cos(\Omega_0)+(z^*)^{-2}} + \frac{z^{-1}\sin(\Omega_0)}{1-2z^{-1}\cos(\Omega_0)+z^{-2}}$,      $|z|>1$ & Yes/No \\
				\hline
			\end{tabular} 
		\end{center}
\end{table}}

\subsection{Properties of the GDTFT}
We present the following properties of the GDTFT: 
\begin{enumerate}
    \item Linearity:
    \begin{align}
        \text{ if } \quad \mathcal{G}\{x[n]\} & =X(z,z^*)=\mathfrak{X}(z^*)+X(z),\\
       \text{ then } \quad \mathcal{G}\left\{\sum_{i} a_i x_i[n]\right\} & =\sum_i a_iX_i(z,z^*)= \sum_i a_i\left[\mathfrak{X}_i(z^*)+X_i(z)\right], \quad  a_i \in \mathbb{C}.
    \end{align}

    \item Time delay: We can write $x[n]=x[n]\,\mathrm{u}[-n-1]+x[n]\,\mathrm{u}[n]$ and hence, 
    \begin{align}
        \mathcal{G}\{x[n+n_0] \, \mathrm{u}[-n-1-n_0]+x[n-n_0] \, \mathrm{u}[n-n_0]\} & ={(z^*)^{-n_0}}\mathfrak{X}(z^*)+z^{-n_0}X(z), \quad n_0\ge0,
    \end{align}
    and
    \begin{align}
        \mathcal{G}\{x[n-m]\} & =\sum_{-\infty}^{-1} x[n-m) \, {(z^*)^n} +\sum_{0}^{\infty}x[n-m]\,z^{-n}, \quad m\ge 0,\nonumber\\
        & ={(z^*) ^{m}}\sum_{-\infty}^{-1-m} x[n] \, {(z^*)^n} +{z^{-m}}\sum_{-m}^{\infty}x[n]\,z^{-n},\\
        &=\left[-\sum_{\ell=-m}^{-1}x[\ell]\,(z*)^{\ell+m}+(z*)^{m}X(z*)\right] + \left[\sum_{\ell=-m}^{-1}x[\ell]\,z^{-\ell-m}+z^{-m}X(z)\right].
    \end{align}
    
    \item Time advance: 
    \begin{align}
        \mathcal{G}\{x[n+m]\} & =\sum_{-\infty}^{-1} x[n+m) \, {(z^*)^n} +\sum_{0}^{\infty}x[n+m]\,z^{-n}, \quad m\ge 0,\nonumber\\
        & ={(z^*) ^{-m}}\sum_{-\infty}^{-1+m} x[n] \, {(z^*)^n} +{z^{m}}\sum_{m}^{\infty}x[n]\,z^{-n},\\
        &=\left[\sum_{\ell=0}^{m}x[\ell]\,(z*)^{\ell-m}+(z*)^{-m}X(z*)\right] + \left[-\sum_{\ell=0}^{m}x[\ell]\,z^{-\ell+m}+z^{m}X(z)\right].
    \end{align}
    
    \item Time reversal:
    \begin{align}
        \mathcal{G}\{x[-n]\} & =X(z^*)+\mathfrak{X}(z).
    \end{align}

    \item Time expansion (scaling):
    \begin{align}
     \mathcal{G}\{x[n/k]\} & =\mathfrak{X}\left((z^*)^k\right)+X\left(z^k\right), \quad k\ge 1.
    \end{align}

    \item The GDTFT of successive time difference:
    \begin{align}
    \mathcal{G}\{x[n]-x[n-1]\} & =\left[(1-z^*)\mathfrak{X}(z^*)+x(-1)\right]+\left[(1-z^{-1})X(z)-x(-1)\right].
\end{align}

\item The GDTFT of accumulation across time:  
\begin{align}
 \mathcal{G}\left\{\sum_{-\infty}^n x[k] \right\} & =\frac{\mathfrak{X}\left(z^*\right)}{1-z^*}+\frac{X\left(z\right)}{1-z^{-1}}.
\end{align}

\item The GDTFT of $x[n]$ multiplied by various functions, i.e., amplitude scaling:
\begin{align}
 \mathcal{G}\{a^{-n}\,x[n]\} & =\mathfrak{X}(a^{-1}z^*)+X(a z),\\
 \mathcal{G}\{a^{-|n|}\,x[n]\} & =\mathfrak{X}(a z^*)+X(a z),\\
  \mathcal{G}\{n^{m}\,x(t)\} & ={(z^*)}^{-m}\frac{\mathrm{d}^m}{\mathrm{d}{z^*}^m }\mathfrak{X}(z^*)+(-z)^m\frac{\mathrm{d}^m}{\mathrm{d}z^m }X(z),\\  
   \mathcal{G}\{|n|^{m}\,x[n]\} & =(-z^*)^{-m}\frac{\mathrm{d}^m}{\mathrm{d}{z^*}^m }\mathfrak{X}(z^*)+(-z)^{m}\frac{\mathrm{d}^m}{\mathrm{d}z^m }X(z).   
\end{align}
\end{enumerate}

\section{Conclusion}\label{con}
Important and fundamental contributions of this study are the introduction of the generalized Fourier transform (GFT) in both continuous-time and discrete-time domains, the Fourier scale transform, and the solution of initial value problem using the GFT and Fourier transform.  
This work has presented the GFT, its properties, discrete-time version, comparison with Laplace and $z$-transforms, along with many interesting aspects including solution of initial value problems. The most interesting aspect of this newly proposed GFT is that it overcomes the limitations generally observed in Fourier, Laplace and $z$ transforms. It is applicable to a much larger class of signals and simplifies the analysis of both continuous-time and discrete-time signals and systems. This transform may find greater utility in applications in the near future. 

% \section*{Ethics statement}
% This study did not involve any active collection of human data.
% \section*{Data accessibility statement}
% Gravitation wave data is publicly available, and no other data is used in this study which cannot be generated by MATLAB/Python or any other programming language.
% \section*{Competing interests statement}
% We have no competing interests.
%\section*{Authors' Contributions}
%P Singh carried out mathematical analyses and the simulation work, conceived and designed the study, did data analysis, and drafted the manuscript. Author approved and submitted the manuscript for publication.
% \section*{Funding}
% There is no funding to support this research.
% \section*{Permission to carry out fieldwork}
% No permissions were required prior to conducting this research.
%\section*{Acknowledgments}
%Author would like to express his sincere appreciation to the anonymous reviewers for their valuable suggestions. Author also would like to show his gratitude to Prof. SD Joshi (IIT Delhi) for discussions and sharing wisdom and expertise during this research.

%\clearpage % You can add a \clearpage before your references. It will flush all the figures up to that point.

\appendix
\section{Multi-dimensional GFT}
We can define the $M$-dimensional ($M$D) GFT as
\begin{align}
    G(\omega_1,\omega_2,\dots,\omega_M,\sigma)&=\int_{-\infty}^{\infty}\int_{-\infty}^{\infty} g(x_1,x_2,\dots,x_M) \exp(-\sigma(|x_1|+|x_2|+\dots+|x_M|)) \times \nonumber \\ &  \exp(-j\omega_1\,x_1-j\omega_2\,x_2-\dots-j\omega_M x_M) \mathrm{d}x_1 \, \mathrm{d}x_2 \dots \mathrm{d}x_M, \label{MD1}
\end{align}
where the value of $\sigma$ is such that the integral \eqref{MD1} exists which resolves the region of convergence (ROC), i.e., $\sigma\in \text{ROC}$ that depends on the function $g(x_1,x_2,\dots,x_M)$.
We can recover the original signal by evaluating
\begin{align}
   g(x_1,x_2,\dots,x_M)  \exp(-\sigma(|x_1|+|x_2|+\dots+|x_M|)) &=\frac{1}{(2\pi)^M}\int_{-\infty}^{\infty}\int_{-\infty}^{\infty}  G(\omega_1,\omega_2,\dots,\omega_M,\sigma) \times \nonumber \\ &  \exp(j\omega_1\,x_1+j\omega_2\,x_2-\dots+j\omega_M x_M) \mathrm{d}\omega_1 \, \mathrm{d}\omega_2 \dots \mathrm{d}\omega_M, \label{MD2}
\end{align}
and then considering, $\lim \sigma \to 0$. Using, $s_k=\sigma+j\omega_k$, we can write \eqref{MD1} and \eqref{MD2} in terms of symbol $s_k$ where $k=1,2,\dots,M$. Clearly, the $M$D-GFT becomes the $M$D-FT when $\lim \sigma \to 0$, and if $g(x_1,x_2,\dots,x_M)$ is defined only for $x_1,x_2,\dots,x_M \ge 0$, and zero otherwise, then it is the $M$D-LT \cite{MDLT1}.

%\linenumbers
%%%%%%%%%%%%%%%%%%%%%%%%%%%%%%%%%%%%%%%%%%%%%%%%%%%%%%%%%%%%%%%%

%\bibliography{mybibfile}

\end{document}